\newcommand{\NP}{{\sf NP}}
\newcommand{\dist}{{\rm dist}}
\newcommand{\diam}{{\rm diam}}
\newcommand{\tw}{{\mathbf{tw}}}
\newcommand{\pw}{{\mathbf{pw}}}
\begin{document}

\title{Squares of Low Maximum Degree
\thanks{This paper received support from  EPSRC (EP/G043434/1), ERC (267959) and ANR project AGAPE. 
The results of this paper appeared (with alternative proofs) as an extended abstract in the proceedings of WG 2013~\cite{CochefertCGKP13}.
}}
\author{
Manfred Cochefert\inst{1}
\and
Jean-Fran\c{c}ois Couturier\inst{2}
\and
Petr A. Golovach\inst{3}
\and
Dieter Kratsch\inst{1}
\and
Dani{\"e}l Paulusma\inst{4}
\and
Anthony Stewart\inst{4}
}

\institute{
Laboratoire d'Informatique Th\'eorique et Appliqu\'ee,
Universit\'e de Lorraine,\\ 57045 Metz Cedex 01, France,\\  \texttt{manfred.cochefert@gmail.com},  \texttt{dieter.kratsch@univ-lorraine.fr}
\and
IFTS/URCA, CReSTIC, 7 Boulevard Jean Delautre, 08005 Charleville-M\'ezi\`eres, France,
\texttt{jean-francois.couturier@univ-reims.fr}
\and
Department of Informatics, 
University of Bergen, PB 7803, 5020 Bergen, Norway,\\
\texttt{petr.golovach@ii.uib.no}
\and
School of Engineering and  Computing Sciences, Durham University,\\ Durham DH1 3LE, UK,
\texttt{\{daniel.paulusma,a.g.stewart\}@durham.ac.uk}
}
\maketitle

\begin{abstract}A graph $H$ is a square root of a graph $G$ if $G$ can be obtained from $H$ by adding an edge between any two vertices in $H$ that are of distance~2.
The {\sc Square Root} problem is that of deciding whether a given graph admits a square root.
This problem is only known to be \NP-complete for chordal graphs and polynomial-time solvable for non-trivial minor-closed
graph classes and a very limited number of other
graph classes. 
We prove that {\sc Square Root} is $O(n)$-time solvable for graphs of maximum degree~5 and $O(n^4)$-time solvable
for graphs  of maximum degree at most~6. 
\end{abstract}

\section{Introduction}\label{s-intro}

The {\it square} $H^2$ of a graph $H=(V_H,E_H)$ is the graph with vertex set~$V_H$, such that any two distinct vertices 
$u,v\in V_H$ are adjacent in $H^2$ if and only if $u$ and $v$ are of distance at most~2 in $H$.
 In this paper we study the reverse concept: a graph $H$ is a {\it square root} of a graph~$G$ if $G=H^2$. There exist  graphs with no square root (such as graphs with a cut-vertex), graphs with a unique square root (such as squares of cycles of length at least~7) as well as graphs with more than one square root (such as complete graphs). 

In 1967 Mukhopadhyay~\cite{Mukhopadhyay67} characterised the class of connected graphs with a square root.
However, in 1994, Motwani and Sudan~\cite{MotwaniS94} showed that  the decision problem {\sc Square Root}, which asks whether a given graph admits 
a square root, is \NP-complete. 
As such, it is natural to restrict the input to special graph classes in order to obtain polynomial-time results. For several well-known graph classes the complexity of {\sc Square Root} is still unknown. For example, 
Milanic and Schaudt~\cite{MilanicS13} posed the complexity of {\sc Square Root} restricted to split graphs and cographs as open problems.
In Table~\ref{t-survey} we survey the known results. 

The last two rows in Table~\ref{t-survey} correspond to the results in this paper. More specifically, 
we prove in Section~\ref{s-5} that {\sc Square Root} is linear-time solvable for graphs of maximum degree at most~5 via a reduction to graphs of bounded treewidth and in Section~\ref{s-6} that {\sc Square Root} is $O(n^4)$-time solvable for graphs of maximum degree at most~6 via a reduction to graphs of bounded size. 

\begin{table}
\begin{center}
 \begin{small}
\begin{tabular}{|c|c|c|}
 \hline
graph class ${\cal G}$&complexity\\
\hline
planar graphs~\cite{LinS95}&linear\\
\hline
non-trivial and minor-closed~\cite{NT14} &linear\\
\hline
$K_4$-free graphs~\cite{GKPS}&linear\\
\hline
$(K_r,P_t)$-free graphs~\cite{GKPS} &linear\\
\hline
$3$-degenerate graphs~\cite{GKPS}&linear\\
\hline
graphs of maximum degree $\leq 5$ &linear\\
\hline
graphs of maximum degree $\leq 6$ &polynomial\\
\hline
graphs of maximum average degree $<\frac{46}{11}$~\cite{GKPS16} &polynomial\\
\hline
line graphs~\cite{MOS14}&polynomial\\
\hline
trivially perfect graphs~\cite{MilanicS13}&polynomial\\
\hline
threshold graphs~\cite{MilanicS13}&polynomial\\
\hline
chordal graphs~\cite{LauC04}&\NP-complete\\
\hline
\end{tabular}
\end{small}
\end{center}
\caption{The known results for {\sc Square Root} restricted to some special graph class~${\cal G}$. Note that the row for planar graphs is absorbed by the row below. The two unreferenced results are the results of this paper.}
\label{t-survey}
\end{table}
\vspace*{-1cm}

Also the {\sc ${\cal H}$-Square Root} problem, which is that of testing whether  a given graph has a square root that belongs to some specified graph class ${\cal H}$, has also been well studied. We refer to Table~\ref{t-survey2} for a survey of the known results on {\sc ${\cal H}$-Square Root}.

\begin{table}
\begin{center}
 \begin{small}
\begin{tabular}{|c|c|c|}
 \hline
graph class ${\cal H}$ &complexity\\
\hline
trees~\cite{LinS95}&polynomial\\
\hline
proper interval graphs~\cite{LauC04}&polynomial\\
\hline
bipartite graphs~\cite{Lau06}&polynomial\\
\hline
block graphs~\cite{LeT10}&polynomial\\
\hline
strongly chordal split graphs~\cite{LeT11}&polynomial\\
\hline
ptolemaic graphs~\cite{LOS15}&polynomial\\
\hline
3-sun-free split graphs~\cite{LOS15}&polynomial\\
\hline
cactus graphs~\cite{GKPS16b}&polynomial\\
\hline
graphs with girth at least $g$ for any fixed $g\geq 6$~\cite{FarzadLLT12}&polynomial\\
\hline
graphs of girth at least~5~\cite{FarzadK12} &\NP-complete\\
\hline
graphs of girth at least~4~\cite{FarzadLLT12} &\NP-complete\\
\hline
split graphs~\cite{LauC04} &\NP-complete\\
\hline
chordal graphs~\cite{LauC04} &\NP-complete\\
\hline
\end{tabular}
\end{small}
\end{center}
\caption{The known results for {\sc ${\cal H}$-Square Root} restricted to various graph classes~${\cal H}$. The result for 3-sun-free split graphs has been extended to a number of other subclasses of split graphs in~\cite{LOS}.}
\label{t-survey2}
\end{table}
\vspace*{-0.75cm}

Finally both {\sc Square Root} and  {\sc ${\cal H}$-Square Root} have been studied under the framework of parameterized complexity.
The generalization of {\sc Square Root} that takes as input a graph $G$ with two subsets $R$ and $B$ of edges that need to be included or excluded, respectively, in any solution (square root)\footnote{We give a formal definition of this generalization in Section~\ref{s-6}, as we need it for proving that {\sc Square Root} is $O(n^4)$-time solvable for graphs of maximum degree at most~6.} has a kernel of size $O(k)$ for graphs that can be made planar after removing at most $k$ vertices~\cite{GKPS16}. The problems of testing whether a connected $n$-vertex graph with $m$ edges has a square root with at most $n-1+k$ edges and
whether such a graph has a square root with at least $m-k$ edges are both fixed-parameter tractable when parameterized by~$k$~\cite{CCGKP}.

\section{Preliminaries}\label{sec:defs}
We only consider finite undirected graphs without loops or multiple edges. 
We refer to the textbook by Diestel~\cite{Diestel10} for any undefined graph terminology.

Let $G$ be a graph.
We denote the vertex set of~$G$ by $V_G$ and the edge set by $E_G$. 
The {\it length} of a path or a cycle is the number of edges of the path or cycle, respectively.
The \emph{distance} $\dist_G(u,v)$ between a pair of vertices $u$ and $v$ of~$G$ is the number of edges of a shortest path between them. 
The diameter $\diam(G)$ of~$G$ is the maximum distance between two vertices of $G$. 
The \emph{neighbourhood} of a vertex $u\in V_G$ is defined as $N_G(u) = \{v\; |\; uv\in E_G\}$.
The {\it degree} of a vertex $u\in V_G$ is defined as $d_G(u)=|N_G(u)|$.
The {\it maximum degree} of $G$ is $\Delta(G)=\max\{d_G(v)\; |\; v\in V_G\}$.
A vertex of degree~1 and the (unique) edge incident to it are said to be a \emph{pendant} vertex and \emph{pendant} edge of $G$ respectively. 
A vertex subset of $G$ that consists of mutually adjacent vertices is called a {\it clique}. 

A \emph{tree decomposition} of a graph $G$ is a pair $(T,X)$ where $T$
is a tree and $X=\{X_{i} \mid i\in V_T\}$ is a collection of subsets (called {\em bags})
of $V_G$ such that the following three conditions hold: 
\begin{itemize}
\item[i)] $\bigcup_{i \in V_T} X_{i} = V_G$, 
\item[ii)] for each edge $xy \in E_G$, $x,y\in X_i$ for some  $i\in V_T$, and 
\item[iii)] for each $x\in V_G$ the set $\{ i \mid x \in X_{i} \}$ induces a connected subtree of $T$.
\end{itemize}
The \emph{width} of a tree decomposition $(\{ X_{i} \mid i \in V_T \},T)$ is $\max_{i \in V_T}\,\{|X_{i}| - 1\}$. The \emph{treewidth} $\tw(G)$ of a graph $G$ is the minimum width over all tree decompositions of $G$. If $T$ restricted to be a path, then we say that $(X,T)$ is a \emph{path decomposition} of a graph $G$. 
The \emph{pathwidth} $\pw(G)$ of $G$  is the minimum width over all path decompositions of $G$.
A class of graphs~${\cal G}$ has {\it bounded} treewidth (pathwidth) if there exists a constant $p$ such that the treewidth (pathwidth) of every graph from ${\cal G}$ is at most~$p$.

\section{Graphs of Maximum Degree at Most~5}\label{s-5}

In this section we prove that {\sc Square Root} can be solved in linear time for graphs of maximum degree at most~5 by showing that squares of maximum degree at most~5 have bounded pathwidth. The latter enables us to use the following two lemmas. The first lemma can either be proven via 
a dynamic programming algorithm or by a non-constructive proof based on Courcelle's theorem~\cite{Courcelle92} (see~\cite{GKPS16} for details).
The second lemma is due to Bodlaender.

\begin{lemma}\label{lem:tw}
The {\sc Square Root} problem can be solved in  
time $O(f(k)n)$ for $n$-vertex graphs of treewidth (or pathwidth) at most $k$.
\end{lemma}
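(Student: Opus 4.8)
\noindent
\emph{Approach.}
The plan is to turn the search for a square root into a problem on graphs of bounded treewidth, using the following structural fact, which I would establish first: every square root $H$ of $G$ is a spanning subgraph of $G$. Indeed $V_H=V_{H^2}=V_G$, and each edge $uv\in E_H$ satisfies $\dist_H(u,v)=1\le 2$, so $uv\in E_{H^2}=E_G$ and hence $E_H\subseteq E_G$. This makes ``$G$ has a square root'' equivalent to ``there is a set $F\subseteq E_G$ with $(V_G,F)^2=G$''. Since $\tw(G)\le\pw(G)$, it then suffices to treat graphs of treewidth at most~$k$.

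\emph{Non-constructive route.}
I would express the property ``$(V_G,F)^2=G$'', with $F$ a free edge-set variable, in monadic second-order logic ($\mathrm{MSO}_2$): it says that for all distinct vertices $u,v$, the vertices $u$ and $v$ are adjacent in $G$ if and only if either some edge of $F$ joins $u$ and $v$, or there is a vertex $w$ with edges of $F$ joining $u$ to $w$ and $w$ to $v$. Prefixing an existential set quantifier over $F$ gives an $\mathrm{MSO}_2$ sentence $\varphi$ that holds in $G$ exactly when $G$ has a square root, and applying Courcelle's theorem~\cite{Courcelle92} to $\varphi$ together with a tree decomposition of width at most~$k$ (obtained by Bodlaender's algorithm) decides the property in time $O(f(k)n)$; this is the route sketched in~\cite{GKPS16}.

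\emph{Constructive route.}
Alternatively, I would run a dynamic programming algorithm over a nice tree decomposition of $G$, rooted arbitrarily. For a node $t$ with bag $X_t$, writing $V_t$ for the union of the bags in the subtree rooted at $t$, I would summarise a candidate for the restriction of a square root to $V_t$ (which uses only edges of $G$ with both ends in $V_t$) by a record consisting of: which edges of $G[X_t]$ it contains; for each pair of vertices of $X_t$, whether they already have a common neighbour inside $V_t$; and which vertices of $X_t$ are not allowed to acquire further neighbours. The introduce, forget and join transitions update these records, discarding any choice that would put two $G$-non-adjacent vertices at distance at most~$2$, and allowing a vertex to be forgotten only once all of its distance-at-most-$2$ requirements have been certified. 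Since the number of records per bag depends only on~$k$, the whole computation takes $O(f(k)n)$ time, and a back-tracking pass recovers an explicit square root when one exists.

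\emph{Main obstacle.}
For the non-constructive route there is essentially nothing to do beyond the spanning-subgraph observation. The real work sits in the bookkeeping of the constructive route: I would need to show that every adjacency and non-adjacency requirement of $G$ involving an already-forgotten vertex can be settled at the moment that vertex leaves the bag, and that the only remaining hazard---creating a forbidden common neighbour through a vertex that stays in the bag---is prevented by barring new neighbours of such vertices. This hinges on the observation that in a tree decomposition a forgotten vertex shares no bag with, and hence has no edge of $G$ to, any vertex introduced later; granting that, the state space is finite and the transitions are routine.
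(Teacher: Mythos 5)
Your proposal is correct and follows essentially the same route as the paper, which itself only sketches this lemma by appealing to either a dynamic programming algorithm over a tree decomposition or a non-constructive argument via Courcelle's theorem (with details deferred to~\cite{GKPS16}); your spanning-subgraph observation ($E_H\subseteq E_G$), the MSO$_2$ formulation with Bodlaender's algorithm, and the bag-record DP are exactly the ingredients intended there. Nothing further is needed.
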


\begin{lemma}[\cite{Bodlaender96}]\label{l-bod}
For any fixed constant~$k$, it is possible to decide in linear time whether the treewidth (or pathwidth) of a graph is at most $k$.
\end{lemma}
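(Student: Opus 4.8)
The plan is to recall the structure of Bodlaender's recursive algorithm, which for a fixed constant~$k$ decides in linear time whether $\tw(G)\le k$ and, if so, returns a tree decomposition of width at most~$k$; the pathwidth case then follows easily. The recursion is on the number~$n$ of vertices of the input graph~$G$. If $n$ is below a suitable threshold depending only on~$k$, solve the instance by brute force. Otherwise compute a maximal matching~$M$ of~$G$ in linear time and branch according to whether $M$ is ``large'' or ``small'' relative to~$n$.

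Suppose first that $|M|\ge\varepsilon_k n$ for a suitable constant~$\varepsilon_k$. Contracting every edge of~$M$ produces a minor~$G'$ of~$G$ with at most $(1-\varepsilon_k)n$ vertices, and hence $\tw(G')\le\tw(G)$. Recurse on $(G',k)$. If the call reports $\tw(G')>k$ we conclude $\tw(G)>k$ and stop; otherwise we obtain a tree decomposition of~$G'$ of width at most~$k$, and replacing in every bag each contracted vertex by the two endpoints of the corresponding matching edge yields a tree decomposition of~$G$ of width at most $2k+1$. Feeding this approximate decomposition to the Bodlaender--Kloks dynamic programming procedure --- which, given any tree decomposition of width~$\ell$, decides $\tw(G)\le k$ and outputs a width-$\le k$ decomposition when one exists, in time linear in~$n$ with a constant depending only on~$\ell$ --- settles the instance in linear time, since here $\ell\le 2k+1$.

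Suppose instead that $|M|<\varepsilon_k n$. Then the $M$-unmatched vertices form an independent set~$S$ with $|S|>(1-2\varepsilon_k)n$, because an edge between two unmatched vertices could be added to~$M$. If $\tw(G)\le k$ then $G$ is $k$-degenerate and so has at most~$kn$ edges; a counting argument then shows that a constant fraction of the vertices of~$S$ have degree at most a constant~$d_k$. The technical heart of the proof is to turn this into a reduction: using the improved-graph construction (repeatedly adding an edge between any two vertices with at least $k+1$ common neighbours, which preserves treewidth as long as $\tw(G)\le k$) together with a pigeonhole argument on the bounded neighbourhoods of these low-degree vertices, one isolates linearly many ``locally equivalent'' vertices on which a local reduction rule of bounded finite integer index applies; this rule removes a constant fraction of the vertices, preserves whether the treewidth is at most~$k$, and lets a width-$\le k$ decomposition be reconstructed. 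Applying it yields a graph~$G''$ with at most $(1-\delta_k)n$ vertices; recurse on $(G'',k)$ and transfer the answer back.

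In both branches exactly one recursive call is made, on a graph smaller by a constant factor, and all remaining work is linear, so the recurrence $T(n)\le T((1-c_k)n)+O(n)$ solves to $T(n)=O(n)$, proving the treewidth statement. For the pathwidth statement, first run the treewidth algorithm: if it certifies $\tw(G)>k$ then $\pw(G)>k$; otherwise it yields a tree decomposition of~$G$ of width at most~$k$, and applying the Bodlaender--Kloks pathwidth dynamic programming to it decides $\pw(G)\le k$ (and produces a path decomposition) in linear time for fixed~$k$. I expect the main obstacle to be precisely the small-matching case: establishing the counting argument and, above all, the finite-integer-index reduction rule that is guaranteed to shrink the graph by a constant fraction while preserving the answer is the delicate core of Bodlaender's proof.
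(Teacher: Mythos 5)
The paper does not prove this lemma at all: it is imported verbatim from Bodlaender's 1996 paper (reference~\cite{Bodlaender96} in the bibliography), and the authors use it as a black box in Theorem~\ref{thm:deg-5}. Your sketch is therefore not competing with any argument in the paper; it is an outline of the cited result itself, and as an outline it follows Bodlaender's actual strategy faithfully in the large-matching branch (contract a maximal matching, recurse, lift the width-$k$ decomposition of the minor to a width-$(2k+1)$ decomposition, then run the Bodlaender--Kloks dynamic programming) and in the overall recurrence $T(n)\le T((1-c_k)n)+O(n)$, as well as in deriving the pathwidth case by running the pathwidth DP on the width-$k$ tree decomposition. The one place where you drift from Bodlaender's proof is the small-matching branch: his argument does not invoke a generic ``finite integer index'' reduction rule (that machinery belongs to the later Bodlaender--de Fluiter reduction-algorithm framework) but works with the improved graph and shows that, when there are few friendly (low-degree) vertices, linearly many vertices are \emph{I-simplicial} --- their neighbourhood is a clique in the improved graph and has size at most $k$ --- and these can all be deleted and later re-inserted into the decomposition. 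Your version would need that specific structural lemma spelled out (your counting claim that a constant fraction of the independent set has degree at most $d_k$ is fine, but the existence and safety of the shrinking rule is exactly the delicate part you defer), so as a self-contained proof it has a gap precisely where you say it does; as a reconstruction of the source the paper cites, it is the right approach.
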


We now prove the key result of this section.

\begin{lemma}\label{lem:deg-5}
If $G$ is a graph with  $\Delta(G)\leq 5$ that has a square root, then $\pw(G)\leq 27$. 
\end{lemma}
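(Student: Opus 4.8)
The plan is to show that if $G = H^2$ has maximum degree at most $5$, then $H$ itself has bounded maximum degree and, more importantly, bounded diameter, after which a bounded-width path decomposition can be read off directly. The first step is to bound $\Delta(H)$: if $v$ has degree $d$ in $H$, then all $d$ neighbours of $v$ plus $v$ form a clique in $G$ (they are pairwise at distance at most $2$ in $H$), so $d_G(v) \geq d$, giving $\Delta(H) \leq \Delta(G) \leq 5$. The second step is to observe that $H$ must be connected and that its diameter is small: since $G$ is connected and $\Delta(G) \leq 5$, every vertex of $G$ has at most $5$ neighbours, and a ball of radius $r$ in $G$ has at most $1 + 5 + 5\cdot 4 + \dots$ vertices; but $\dist_H(u,v) \leq 2\dist_G(u,v) + 1$ is the wrong direction, so instead I would argue that $H$ cannot contain a long induced path or a long geodesic, because the square of a path of length $\ell$ has a vertex of degree close to $4$ only in the interior, yet a long path in $H$ forces, via the degree bound on $G$, that $H$ is ``thin'' — formally, $\dist_G(u,v) \geq \lceil \dist_H(u,v)/2\rceil$, so if $H$ had two vertices at $H$-distance $D$ then $G$ would contain two vertices at distance at least $D/2$, and combined with $\Delta(G)\le 5$ this does not immediately bound $D$. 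Hence the genuinely useful fact is different: I would bound the number of vertices of $H$ that can lie ``far'' from any fixed vertex by exploiting that $H$ has small maximum degree \emph{and} that $H^2$ has small maximum degree, which together force $H$ to be a subgraph of a graph of bounded bandwidth-like structure.

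Concretely, the cleaner route is: fix a breadth-first search layering $L_0, L_1, L_2, \dots$ of $H$ from an arbitrary root. Any vertex in layer $L_i$ is adjacent in $G$ to all of its $H$-neighbours and all vertices at $H$-distance $2$; in particular every vertex of $L_i$ is $G$-adjacent to every vertex of $L_{i-1}$ that is its parent's sibling-neighbourhood, and crucially all of $L_i \cup L_{i-1}$ restricted to a single ``branch'' forms a clique in $G$, bounding each layer's local width by $\Delta(G)+1 = 6$. Then a standard argument shows $|L_{i-1}| + |L_i| + |L_{i+1}|$, restricted to the relevant interacting part, is bounded by a constant depending only on $\Delta(G)$, so the path decomposition whose $i$-th bag is $L_{i-1}\cup L_i \cup L_{i+1}$ (taken over the BFS layers of $H$, which is also a valid structure for $G$ since $G$-edges only go between consecutive $H$-layers or within a layer) has width bounded by a constant. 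The arithmetic — each layer interacts with $\le 2$ neighbouring layers, each vertex has $\le 5$ $G$-neighbours, so each bag has at most $3 \cdot 6 = 18$ or so vertices — needs to be tuned to land at exactly $28$ vertices per bag, i.e. width $27$.

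The step I expect to be the main obstacle is proving that the BFS layers of $H$ are individually small, or equivalently that the relevant portion of three consecutive layers has bounded size: a priori a BFS layer of $H$ could be large even though $\Delta(H)$ is bounded. The resolution must use that $G = H^2$ has bounded degree: two vertices $u, w$ in the same layer $L_i$ with a common neighbour in $L_{i-1}$ are adjacent in $G$, so the set of $L_i$-vertices sharing a given $L_{i-1}$-parent forms a clique in $G$ and hence has size $\le 5$; propagating this down the BFS tree and using $\Delta(H)\le 5$ bounds each layer's ``active'' part. I would need to be careful to take bags not as whole layers but as the union of a bounded number of consecutive layers intersected with the ball around a current edge of $H$, so that the decomposition is genuinely a path decomposition of bounded width and still covers all of $V_G$ and all edges of $G$; verifying the three tree-decomposition axioms for this construction, and squeezing the constant down to $27$, is the part that requires real care rather than routine calculation.
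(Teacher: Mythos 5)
Your overall skeleton is the same as the paper's: run a BFS in the square root $H$, note that every edge of $G=H^2$ joins vertices at most two BFS levels apart, and take bags $X_i=L_i\cup L_{i+1}\cup L_{i+2}$ to get a path decomposition of $G$ whose width is one less than the largest union of three consecutive levels (this is the paper's Claim~B, and your verification sketch of the three axioms would go through). The genuine gap is exactly the step you flag as ``the main obstacle'': bounding the size of the levels. Your proposed resolution --- children of a common parent are pairwise adjacent in $G$, hence at most $5$ (in fact at most $3$ or $4$) per parent, then ``propagate down the BFS tree using $\Delta(H)\leq 5$'' --- does not work. A bound on the branching factor alone is compatible with levels of size growing like $4^i$: a bounded-degree tree satisfies all of your local clique constraints, yet its square has unbounded degree precisely because a vertex can have many grandchildren spread over several children. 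So a purely local, per-parent argument cannot bound $|L_j|$; you need a non-local argument.

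The paper closes this gap with its Claim~A: for every vertex $x$ in level $L_i$ with $i\geq 2$, $x$ has at most three children and, for every $j>i$, at most \emph{four descendants in $L_j\cup L_{j+1}$}. The proof is a lowest-common-ancestor argument: if $x$ had two descendants in some level $L_j$ whose lowest common ancestor $v$ lies in $L_k$ with $k<j-1$, or two descendants in $L_{j+1}$, then $v$ (which, being at level $\geq 2$, also has a parent and grandparent within $H$-distance $2$) would have at least six neighbours in $G$, contradicting $\Delta(G)\leq 5$. In other words, the subtree below any level-$2$ vertex is forced to be essentially a path with constant-size decorations, not merely a bounded-degree tree. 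Combined with $|L_1\cup L_2|\leq 5$ (so $|L_2|\leq 4$), this gives $|L_j\cup L_{j+1}\cup L_{j+2}|\leq 28$ for all $j$, hence width $27$; note also that the diameter of $H$ is \emph{not} bounded (squares of long paths have maximum degree $4$), so there may be arbitrarily many levels, and your alternative idea of intersecting bags with balls around an edge of $H$ would not yield a valid path decomposition as stated. Without an argument of the Claim~A type, your proof does not reach a constant bound on the bag sizes.
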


\begin{proof}
Without loss of generality we assume that $G$ is connected; otherwise, we can consider the components of $G$ separately. Let $H$ be a square root of $G$.
Let $u\in V_G$. In $H$ we apply a breadth-first search (BFS) starting at $u$. This yields the
levels $L_0,\ldots,L_s$ for some $s\geq 0$, where $L_i(u)=\{v\in V_G\; |\; \dist_G(u,v)=i\}$ for $i=0,\ldots,s$.
Note that $L_0=\{u\}$ and that $L_0(u)\cup \cdots \cup L_s$ is a partition of $V_H=V_G$.
Let $T$ be the corresponding BFS-tree of $H$ rooted in $u$. Note that $T$ also defines a parent-child relation on the vertices of $G$. 

We prove the following two claims.

\medskip
\noindent
{\bf Claim A. }{\it Let $i\ge 2$. Then 
$x\in L_i$ implies that 
\begin{itemize}
\item[i)] $x$ has at most three children in $T$, and  
\item[ii)] for any $j\in\{i+1,\ldots,s-1\}$, $x$ has at most four descendants in $L_j\cup L_{j+1}$.
\end{itemize}
}

\begin{figure}[ht]
\centering\scalebox{0.95}{\input{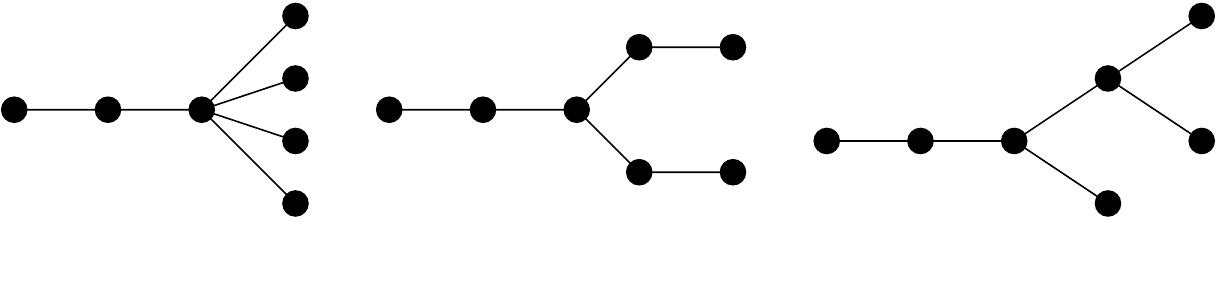_t}}
\caption{Forbidden subgraphs for square roots of graphs of maximum degree at most~5.
\label{fig:deg-5}}
\end{figure}

\noindent
We prove Claim~A as follows.
First we show i) by observing that if $x$ had at least four children in $T$, then $H$ contains the subgraph shown in Fig.~\ref{fig:deg-5} a) and, therefore, $d_G(x)\geq 6$, which is a contradiction.

We now prove ii).
First assume that $x$ has exactly one descendant $y\in L_j$. Then $y$ has at most three children in $L_{j+1}$ due to i), and hence the total number of descendants of $x$ in $L_j\cup L_{j+1}$ is at most~4.
Now assume that $x$ has at least two descendants in $L_j$, say $y_1$ and $y_2$. 
Let $v$ denote the lowest common ancestor of $y_1,y_2$ in $T$. Note that $v$ is a descendant of $x$ or $v=x$, so $v\in L_k$ for some 
$i\leq k\leq j-1$.
Suppose $k<j-1$. Then  $H$ contains the subgraph shown in Fig.~\ref{fig:deg-5} b) and hence $d_G(v)\geq 6$, a contradiction.  Hence, $v\in L_{j-1}$ and, moreover, we may assume without loss of generality that $v$ is the parent in $T$ of all the descendants of $x$ in $L_j$ 
(as otherwise there exists a vertex $v'\in L_{j-1}$ with a neighbour $y_3\in L_j$, which means that the lowest common ancestor of $v$ and $v'$ has six neighbours in $G$). 

By i), we find that $v$ has at most three children. Hence, $x$ has at most three descendants in $L_j$. To obtain a contradiction, 
assume that $x$ has at least two descendants  $z_1,z_2$ in $L_{j+1}$. If $z_1,z_2$ have distinct parents we again find that  $H$ contains the forbidden subgraph shown in Fig.~\ref{fig:deg-5} b). 
If $z_1,z_2$ have the same parent, $H$ contains the subgraph shown in Fig.~\ref{fig:deg-5} c) and  hence $d_G(v)\geq 6$, a contradiction.
Hence, $x$ has at most one descendant in $L_{j+1}$. We conclude that  the total number of descendants of $x$ in $L_j\cup L_{j+1}$ is at most 4. 
This completes the proof of ii). Consequently we have proven Claim~A.

\medskip
\noindent
{\bf Claim B. }{\it 
$\pw(G)\leq \max\{ |L_i\cup L_{i+1}\cup L_{i+2}|\; |\;0\leq i\leq s\}-1.$}

\medskip
\noindent
We prove Claim B as follows.
Let $P$ be the path on vertices 
$0,\ldots,s$ (in the order of the path). We set $X_i=L_i\cup L_{i+1}\cup L_{i+2}$ for all $i\in \{0,\ldots,s\}$ and define $X=\{X_1,\ldots,X_s\}$.
We claim that $(X,P)$ is a path decomposition of $G$. This can be seen as follows.
Since the sets  $L_0,\ldots,L_s$ form a partition of $V_G$, we find that $\bigcup_{i=0}^{s} X_{i} = V_G$.
Moreover, for every edge $xy\in E_{G}$ with $x\in L_i$ and $y\in L_j$ we see that
$|i-j|\leq 2$. Hence, 
for each edge $xy \in E_{G}$, $x,y\in X_i$ for some  $i\in \{0,\ldots,s\}$. 
Finally, if $x\in X_i\cap X_j$ such that $i+1<j$, then $i+2=j$ and $x\in L_{i+2}\subseteq X_{i+1}$. Therefore, 
the set $\{ i \mid x \in X_{i} \}$ induces a subpath of $P$ for each $x\in V_G$. It remains to observe that the width of $(X,P)$ is 
$\max\{ |L_i\cup L_{i+1}\cup L_{i+1}|\; |\; 0\leq i\leq s\}-1$.
This completes the proof of Claim~B.

\medskip
\noindent
Because $d_G(u)\leq 5$, $|L_1\cup L_2|\leq 5$ and thus $|L_2|\leq 4$ and
$|L_0\cup L_1\cup L_2|\leq 6$. By Claim~A, each vertex of $L_2$ has at most three children in $T$ and at most four descendants in $L_3\cup L_4$.
Hence, $|L_1\cup L_2\cup L_3|\leq 17$ and $|L_2\cup L_3\cup L_4|\leq 20$.
For $j\in\{3,\ldots,s\}$, each vertex of $L_2$ has at most four descendants in $L_j\cup L_{j+1}$ and also at most four descendants in $L_{j+1}\cup L_{j+2}$ by Claim~A ii). Therefore, each vertex of $L_2$ has at most seven descendants in $L_j\cup L_{j+1}\cup L_{j+2}$. As $|L_2|\leq 4$, this means that $|L_j\cup L_{j+1}\cup L_{j+2}|\leq 28$.
We conclude that $|L_i\cup L_{i+1}\cup L_{i+2}|\leq 28$ for all 
$i\in\{0,\dots,s\}$. Consequently, Claim~B implies that $\pw(G)\leq 27$.\qed
\end{proof}

We are now ready to prove the main theorem of this section.

\begin{theorem}\label{thm:deg-5}
 {\sc  Square Root} can be solved in time $O(n)$ for  $n$-vertex graphs of maximum degree at most~$5$.  
\end{theorem}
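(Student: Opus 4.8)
The plan is to assemble the three lemmas just stated into a two-phase algorithm. Lemma~\ref{lem:deg-5} guarantees that every ``yes''-instance of maximum degree at most~$5$ has pathwidth at most the fixed constant $27$; Lemma~\ref{l-bod} lets us test this bound in linear time; and Lemma~\ref{lem:tw} solves {\sc Square Root} in linear time once such a bound on the pathwidth is in hand. Concretely, given an $n$-vertex graph $G$ with $\Delta(G)\le 5$ (note $G$ then has $O(n)$ edges, so ``linear'' here means $O(n)$), I would first run the algorithm of Lemma~\ref{l-bod} with $k=27$ to decide whether $\pw(G)\le 27$. If not, then by the contrapositive of Lemma~\ref{lem:deg-5} the graph $G$ has no square root and we reject. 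If so, I would then call the algorithm of Lemma~\ref{lem:tw} with $k=27$, which decides in time $O(f(27)\cdot n)=O(n)$ whether $G$ admits a square root, and we output its answer.

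The one point requiring a sentence of justification is that the input $G$ need not be connected, while Lemma~\ref{lem:deg-5} is stated for connected graphs. This is not a real difficulty: squaring preserves the partition into connected components (vertices lying in different components of $H$ are at infinite distance and hence non-adjacent in $H^2$), so $G$ has a square root if and only if every connected component of $G$ does, and each component still has maximum degree at most~$5$. Thus I would apply Lemma~\ref{lem:deg-5} componentwise to conclude $\pw(G)\le 27$ for any ``yes''-instance, or, equivalently, run the two-phase algorithm above on each component and accept iff all components are accepted; splitting $G$ into components and summing the per-component running times costs $O(n)$ overall.

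I do not expect a genuine obstacle here: the structural work --- the BFS-layering argument bounding the pathwidth of squares of degree-$5$ graphs --- has already been carried out in Lemma~\ref{lem:deg-5}, and Lemmas~\ref{l-bod} and~\ref{lem:tw} are invoked as black boxes. The only things to check are bookkeeping: that $27$ is a constant independent of $G$ (so that $f(27)$ is a constant and both cited linear-time routines apply), that the precondition $\pw(G)\le 27$ of Lemma~\ref{lem:tw} is indeed established by the first phase before that lemma is invoked, and that the total time over all components stays $O(n)$. Correctness is then immediate: a ``yes''-instance passes phase one and is correctly decided in phase two, whereas a ``no''-instance is rejected either in phase one (if $\pw(G)>27$) or in phase two (if $\pw(G)\le 27$).
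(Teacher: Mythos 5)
Your proposal is correct and follows the paper's proof essentially verbatim: test $\pw(G)\leq 27$ via Lemma~\ref{l-bod}, reject if not (by the contrapositive of Lemma~\ref{lem:deg-5}), and otherwise decide via Lemma~\ref{lem:tw}. The extra remark on disconnected inputs is fine but not strictly needed, since Lemma~\ref{lem:deg-5} is already stated for arbitrary graphs (its proof reduces to components internally).
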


\begin{proof}
Let $G$ be a graph with $\Delta(G)\leq 5$. By Lemma~\ref{l-bod} we can check in $O(n)$ time whether $\pw(G)\leq 27$. If $\pw(G)>27$, then $G$ has no square root by Lemma~\ref{lem:deg-5}. Otherwise, we solve {\sc Square Root} in $O(n)$ time by using Lemma~\ref{lem:tw}.\qed
\end{proof}

\noindent
\noindent
{\bf Remark 1.} The above approach cannot be extended to graphs of maximum degree at most~6. In order to see this, 
take a wall (see Figure~\ref{f-walls}) and subdivide each edge, that is, replace each edge~$uw$ by a path $uvw$ where $v$ is a new vertex.
This gives us a graph~$H$, such that $H^2$ has degree at most~6. 
A wall of height~$h$ has treewidth $\Omega(h)$ (see, for example,~\cite{Diestel10}).   
As subdividing an edge and adding edges does not decrease the treewidth of a graph, this means that the graph $H^2$ can have an
arbitrarily large treewidth.

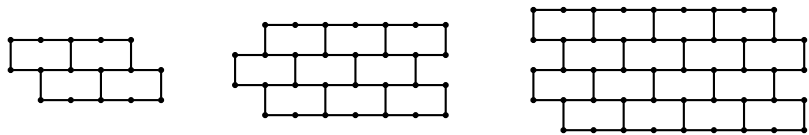
\begin{figure}
\begin{center}
\begin{minipage}{0.2\textwidth}
\centering
\begin{tikzpicture}[scale=0.4, every node/.style={scale=0.3}]
\GraphInit[vstyle=Simple]
\SetVertexSimple[MinSize=6pt]
\Vertex[x=1,y=0]{v10}
\Vertex[x=2,y=0]{v20}
\Vertex[x=3,y=0]{v30}
\Vertex[x=4,y=0]{v40}
\Vertex[x=5,y=0]{v50}

\Vertex[x=0,y=1]{v01}
\Vertex[x=1,y=1]{v11}
\Vertex[x=2,y=1]{v21}
\Vertex[x=3,y=1]{v31}
\Vertex[x=4,y=1]{v41}
\Vertex[x=5,y=1]{v51}

\Vertex[x=0,y=2]{v02}
\Vertex[x=1,y=2]{v12}
\Vertex[x=2,y=2]{v22}
\Vertex[x=3,y=2]{v32}
\Vertex[x=4,y=2]{v42}

\Edges(    v10,v20,v30,v40,v50)
\Edges(v01,v11,v21,v31,v41,v51)
\Edges(v02,v12,v22,v32,v42)

\Edge(v01)(v02)

\Edge(v10)(v11)

\Edge(v21)(v22)

\Edge(v30)(v31)

\Edge(v41)(v42)

\Edge(v50)(v51)

\end{tikzpicture}
\end{minipage}
\begin{minipage}{0.3\textwidth}
\centering
\begin{tikzpicture}[scale=0.4, every node/.style={scale=0.3}]
\GraphInit[vstyle=Simple]
\SetVertexSimple[MinSize=6pt]
\Vertex[x=1,y=0]{v10}
\Vertex[x=2,y=0]{v20}
\Vertex[x=3,y=0]{v30}
\Vertex[x=4,y=0]{v40}
\Vertex[x=5,y=0]{v50}
\Vertex[x=6,y=0]{v60}
\Vertex[x=7,y=0]{v70}

\Vertex[x=0,y=1]{v01}
\Vertex[x=1,y=1]{v11}
\Vertex[x=2,y=1]{v21}
\Vertex[x=3,y=1]{v31}
\Vertex[x=4,y=1]{v41}
\Vertex[x=5,y=1]{v51}
\Vertex[x=6,y=1]{v61}
\Vertex[x=7,y=1]{v71}

\Vertex[x=0,y=2]{v02}
\Vertex[x=1,y=2]{v12}
\Vertex[x=2,y=2]{v22}
\Vertex[x=3,y=2]{v32}
\Vertex[x=4,y=2]{v42}
\Vertex[x=5,y=2]{v52}
\Vertex[x=6,y=2]{v62}
\Vertex[x=7,y=2]{v72}

\Vertex[x=1,y=3]{v13}
\Vertex[x=2,y=3]{v23}
\Vertex[x=3,y=3]{v33}
\Vertex[x=4,y=3]{v43}
\Vertex[x=5,y=3]{v53}
\Vertex[x=6,y=3]{v63}
\Vertex[x=7,y=3]{v73}

\Edges(    v10,v20,v30,v40,v50,v60,v70)
\Edges(v01,v11,v21,v31,v41,v51,v61,v71)
\Edges(v02,v12,v22,v32,v42,v52,v62,v72)
\Edges(    v13,v23,v33,v43,v53,v63,v73)

\Edge(v01)(v02)

\Edge(v10)(v11)
\Edge(v12)(v13)

\Edge(v21)(v22)

\Edge(v30)(v31)
\Edge(v32)(v33)

\Edge(v41)(v42)

\Edge(v50)(v51)
\Edge(v52)(v53)

\Edge(v61)(v62)

\Edge(v70)(v71)
\Edge(v72)(v73)
\end{tikzpicture}
\end{minipage}
\begin{minipage}{0.35\textwidth}
\centering
\begin{tikzpicture}[scale=0.4, every node/.style={scale=0.3}]
\GraphInit[vstyle=Simple]
\SetVertexSimple[MinSize=6pt]
\Vertex[x=1,y=0]{v10}
\Vertex[x=2,y=0]{v20}
\Vertex[x=3,y=0]{v30}
\Vertex[x=4,y=0]{v40}
\Vertex[x=5,y=0]{v50}
\Vertex[x=6,y=0]{v60}
\Vertex[x=7,y=0]{v70}
\Vertex[x=8,y=0]{v80}
\Vertex[x=9,y=0]{v90}

\Vertex[x=0,y=1]{v01}
\Vertex[x=1,y=1]{v11}
\Vertex[x=2,y=1]{v21}
\Vertex[x=3,y=1]{v31}
\Vertex[x=4,y=1]{v41}
\Vertex[x=5,y=1]{v51}
\Vertex[x=6,y=1]{v61}
\Vertex[x=7,y=1]{v71}
\Vertex[x=8,y=1]{v81}
\Vertex[x=9,y=1]{v91}

\Vertex[x=0,y=2]{v02}
\Vertex[x=1,y=2]{v12}
\Vertex[x=2,y=2]{v22}
\Vertex[x=3,y=2]{v32}
\Vertex[x=4,y=2]{v42}
\Vertex[x=5,y=2]{v52}
\Vertex[x=6,y=2]{v62}
\Vertex[x=7,y=2]{v72}
\Vertex[x=8,y=2]{v82}
\Vertex[x=9,y=2]{v92}

\Vertex[x=0,y=3]{v03}
\Vertex[x=1,y=3]{v13}
\Vertex[x=2,y=3]{v23}
\Vertex[x=3,y=3]{v33}
\Vertex[x=4,y=3]{v43}
\Vertex[x=5,y=3]{v53}
\Vertex[x=6,y=3]{v63}
\Vertex[x=7,y=3]{v73}
\Vertex[x=8,y=3]{v83}
\Vertex[x=9,y=3]{v93}

\Vertex[x=0,y=4]{v04}
\Vertex[x=1,y=4]{v14}
\Vertex[x=2,y=4]{v24}
\Vertex[x=3,y=4]{v34}
\Vertex[x=4,y=4]{v44}
\Vertex[x=5,y=4]{v54}
\Vertex[x=6,y=4]{v64}
\Vertex[x=7,y=4]{v74}
\Vertex[x=8,y=4]{v84}

\Edges(    v10,v20,v30,v40,v50,v60,v70,v80,v90)
\Edges(v01,v11,v21,v31,v41,v51,v61,v71,v81,v91)
\Edges(v02,v12,v22,v32,v42,v52,v62,v72,v82,v92)
\Edges(v03,v13,v23,v33,v43,v53,v63,v73,v83,v93)
\Edges(v04,v14,v24,v34,v44,v54,v64,v74,v84)

\Edge(v01)(v02)
\Edge(v03)(v04)

\Edge(v10)(v11)
\Edge(v12)(v13)

\Edge(v21)(v22)
\Edge(v23)(v24)

\Edge(v30)(v31)
\Edge(v32)(v33)

\Edge(v41)(v42)
\Edge(v43)(v44)

\Edge(v50)(v51)
\Edge(v52)(v53)

\Edge(v61)(v62)
\Edge(v63)(v64)

\Edge(v70)(v71)
\Edge(v72)(v73)

\Edge(v81)(v82)
\Edge(v83)(v84)

\Edge(v90)(v91)
\Edge(v92)(v93)
\end{tikzpicture}
\end{minipage}
\caption{Walls of height 2, 3, and 4, respectively.}\label{f-walls}
\end{center}
\end{figure}

\section{Graphs of Maximum Degree at Most~6}\label{s-6}

In this section we show that the {\sc Square Root} problem can be solved in $O(n^4)$ time for $n$-vertex graphs of maximum degree at most~6.
In order to do this we need to consider the aforementioned generalization of {\sc Square Roots}, which is defined as follows.

\begin{description}
\item [{\sc Square Root with Labels}] 
\item[Input:] a graph $G$ and two sets of edges $R,B\subseteq E_G$.
\item[Question:] is there a graph $H$ with $H^2=G$, 
$R\subseteq E_H$ and $B\cap E_H=\emptyset$?
\end{description}

\noindent
Note that {\sc Square Root}  is indeed a special case of {\sc Square Root with Labels}: choose $R=B=\emptyset$.

The main idea behind our proof is to reduce to graphs with a bounded number of vertices by using the reduction rule that we recently introduced in~\cite{GKPS16} (the proof in~\cite{CochefertCGKP13} used a different and less general reduction rule, namely, the so-called path reduction rule, which only ensured boundedness of treewidth). In order to explain our new reduction we need the following definition.
An edge $uv$ of a graph $G$ is said to be \emph{recognizable} if 
$N_G(u)\cap N_G(v)$ has a partition $(X,Y)$, where $X=\{x_1,\ldots,x_p\}$ and $Y=\{y_1,\ldots,y_q\}$, for some $p,q\geq 1$, such that 
the following conditions are satisfied:
\begin{itemize}
\item[a)] $X$ and $Y$ are disjoint cliques in $G$;
\item[b)] $x_iy_j\notin E_G$ for $i\in\{1,\ldots,p\}$ and $j\in\{1,\ldots,q\}$;
\item[c)] for any $w\in N_G(u)\setminus (X\cup Y\cup \{v\})$, $wy_j\notin E_G$ for  $j\in\{1,\ldots,q\}$;
\item[d)]  for any $w\in N_G(v)\setminus (X\cup Y\cup \{u\})$, $wx_i\notin E_G$ for  $i\in\{1,\ldots,p\}$;
\item[e)] for any $w\in N_G(u)\setminus (X\cup Y\cup \{v\})$, there is an $i\in\{1,\ldots,p\}$ such that 
$wx_i\in E_G$;
\item[f)] for any $w\in N_G(v)\setminus (X\cup Y\cup \{u\})$, there is a $j\in\{1,\ldots,q\}$
such that $wy_j\in E_G$.
\end{itemize}
This leads to an algorithmic and a structural lemma, which we both need.

\begin{lemma}[\cite{GKPS16}]\label{lem:preproc}
For an instance $(G,R,B)$ of {\sc Square Root with Labels} where $G$ has $n$ vertices and $m$ 
edges, it takes  $O(n^2m^2)$ time to either solve the problem or to obtain an instance $(G',R',B')$ that has no recognizable edges and that is a yes-instance if and only if $(G,R,B)$ is a yes-instance.
\end{lemma}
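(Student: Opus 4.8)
The plan is to prove that recognizable edges in $G$ correspond exactly to edges forced to lie in every square root $H$, in a way that also locally determines the neighbourhood structure of $H$ around such an edge, so that contracting (or otherwise reducing) the two cliques $X$ and $Y$ yields an equivalent smaller instance. First I would establish the structural counterpart of Lemma~\ref{lem:preproc}: if $uv$ is a recognizable edge of $G$ with partition $(X,Y)$ of $N_G(u)\cap N_G(v)$ as in conditions a)--f), then for every square root $H$ of $G$ (compatible with the labels $R,B$) the edge $uv$ belongs to $E_H$, and moreover $X\cup\{v\}$ and $Y\cup\{u\}$ sit on ``opposite sides'' of $uv$ in $H$ in a precise sense. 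The intuition is that $u$ and $v$ must be at distance $1$ in $H$ (not $2$): if they were at distance $2$ via a common $H$-neighbour $z$, then using conditions a)--b) one shows the sets $X$ and $Y$ cannot both be $H$-adjacent to both $u$ and $v$ through short paths without creating an edge of $G$ forbidden by b), or a missing edge forced by the distance-$2$ closure. Conditions c)--d) pin down that no ``external'' neighbour of $u$ can reach $Y$ (and symmetrically), which forces the external neighbours of $u$ to lie on the $u$-side; conditions e)--f) then force each such external neighbour to actually be $H$-adjacent to some $x_i$, so that it is still at distance $2$ from $v$ in $H$ as required by $uv\in E_G$.

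Next I would carry out the reduction itself. Given a recognizable edge $uv$, the reduction rule from~\cite{GKPS16} replaces the local gadget around $uv$ by a bounded-size gadget: concretely one may delete all but a bounded portion of $X$ and $Y$, or replace $X\cup Y$ by a constant number of representative vertices, while recording in $R$ that the relevant edges are present in the root and in $B$ that the forbidden non-edges (condition b)) stay absent. Using the structural lemma of the previous step, a square root of the reduced instance extends canonically to a square root of the original instance and vice versa, so yes-instances are preserved. The algorithmic content of Lemma~\ref{lem:preproc} is then: repeatedly detecting a recognizable edge and applying the rule. Detecting one recognizable edge is polynomial --- for each edge $uv$ we must guess the partition $(X,Y)$ of the common neighbourhood, but conditions c)--f) essentially force the partition to be the unique one in which $Y$ is the set of common neighbours not adjacent to any external neighbour of $u$, so the partition can be read off in $O(n+m)$ time and then conditions a)--f) checked in $O(m^{2})$ time; over all $m$ edges and at most $O(n)$ rounds of reduction this gives the stated $O(n^{2}m^{2})$ bound. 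If at some point the local structure is contradictory (e.g. a forced edge of $R$ conflicts with a forbidden edge of $B$, or the gadget cannot be completed), we output ``no''.

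The main obstacle I expect is the correctness of the structural lemma --- precisely showing that $uv$ must be an edge of every square root. This requires a careful case analysis of how $u$, $v$, the two cliques, and the external neighbours can be arranged in $H$: one has to rule out, using only a)--f) and the defining property $G=H^{2}$, every configuration in which $\dist_H(u,v)=2$. The delicate point is that $X$ and $Y$ need not each be connected in $H$ in an obvious way, and external neighbours of $u$ and $v$ interact with $X$ and $Y$; conditions e)--f) are exactly what is needed to close this gap, but assembling them into a complete argument is the technical heart. A secondary subtlety is ensuring the reduction does not interfere with already-fixed labels $R$ and $B$ (monotonicity of the rule), and that iterating the rule terminates, which follows since each application strictly decreases $|V_G|$. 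Since Lemma~\ref{lem:preproc} is quoted from~\cite{GKPS16}, for the present paper it suffices to invoke it; the above sketch indicates how its proof would go.
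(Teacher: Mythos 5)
The paper contains no proof of Lemma~\ref{lem:preproc} at all: it is imported verbatim from~\cite{GKPS16}, so your closing remark---that invoking the citation suffices for the present paper---is exactly what the authors do. Your structural intuition is also sound: conditions b) and c) (resp.\ d)) do force $uv\in E_H$ for every root $H$ respecting the labels, and in fact this is short---if $uv\notin E_H$, a common $H$-neighbour $z$ of $u$ and $v$ lies in $X\cup Y$, say $z\in X$; then any $y\in Y$ is $H$-adjacent to neither $u$ nor $v$ (otherwise $\dist_H(y,z)\le 2$ contradicts b)), so $y$ reaches $u$ through some $w\in N_G(u)$ with $wy\in E_G$, and $w$ cannot lie in $X$ (by b)), in $Y$ (just excluded), be $v$, or be an external neighbour of $u$ (by c)), a contradiction.

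As a standalone proof sketch, however, the attempt has genuine gaps. The central object---the reduction rule itself---is never fixed: you float three different operations (contracting, deleting ``all but a bounded portion'' of $X$ and $Y$, replacing $X\cup Y$ by representatives), and for none of them do you argue either direction of the equivalence; ``extends canonically and vice versa'' is precisely what must be proved, and it rests on the stronger structural statement (up to swapping $X$ and $Y$, one may assume $N_H(u)=X\cup\{v\}$ and $N_H(v)=Y\cup\{u\}$), which you only gesture at. Your detection procedure is also incorrect as described: when $u$ and $v$ have no external neighbours, ``$Y$ equals the common neighbours not adjacent to any external neighbour of $u$'' returns $Y=N_G(u)\cap N_G(v)$ and $X=\emptyset$, violating $p\ge 1$; the correct observation is that a) and b) force $G[N_G(u)\cap N_G(v)]$ to be the disjoint union of exactly two cliques, so only two candidate partitions (one the swap of the other) need testing. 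Finally, the running-time bookkeeping does not reach the claimed bound: your own figures ($O(m^2)$ per edge test, $m$ edges per round, $O(n)$ rounds) add up to $O(nm^3)$, not $O(n^2m^2)$. None of this affects the present paper, which uses the lemma only as a black box, but the sketch as written would not pass as a proof of it.
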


\begin{lemma}[\cite{GKPS16}]\label{lem:edge-one2}
Let $H$ be a square root of a graph with no recognizable edges. Then every non-pendant edge of $H$ lies on a cycle of length at
most~$6$.
\end{lemma}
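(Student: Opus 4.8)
The plan is to prove the contrapositive in spirit: take a non-pendant edge $uv$ of a square root $H$ of $G$ (where $G$ has no recognizable edges), assume for contradiction that $uv$ lies on no cycle of length at most~$6$ in $H$, and then exhibit a partition $(X,Y)$ of $N_G(u)\cap N_G(v)$ witnessing that $uv$ is recognizable in $G$, contradicting the hypothesis. The natural candidates are $X = N_H(v)\setminus\{u\}$ (the "$v$-side" common neighbours) and $Y = N_H(u)\setminus\{v\}$ (the "$u$-side" common neighbours): every vertex of $N_H(v)$ other than $u$ is at distance~$2$ from $u$ in $H$ (via $v$), hence adjacent to $u$ in $G$, and symmetrically, so $X\cup Y\subseteq N_G(u)\cap N_G(v)$, and conversely every common neighbour of $u$ and $v$ in $G$ is reached by a path of length~$\le 2$ from each, which — since $uv\in E_H$ and using that short cycles through $uv$ are forbidden — forces it into $N_H(u)\cup N_H(v)$.

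First I would verify that $X$ and $Y$ are nonempty: since $uv$ is a non-pendant edge, at least one of $u,v$ has another neighbour in $H$; one then argues that in fact both do, because a vertex of $N_H(u)\setminus\{v\}$ together with $u,v$ and the fact that $v$ is non-pendant produces a short path, and any failure of non-emptiness would make $uv$ behave like a pendant edge. Next I would check disjointness of $X$ and $Y$: a vertex $w\in N_H(u)\cap N_H(v)$ would give the triangle $uvw$, a cycle of length~$3\le 6$ through $uv$, contradiction. Then I would establish each of the six conditions (a)--(f) by translating a violation into a short cycle through $uv$. For (a), if $x_i,x_j\in X$ are non-adjacent in $G$, they are at distance $\ge 3$ in $H$, but both are adjacent to $v$, so they are at distance~$2$ — contradiction; hence $X$ (and symmetrically $Y$) is a clique in $G$. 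For (b), if $x_i\in X$ and $y_j\in Y$ are adjacent in $G$, then $\dist_H(x_i,y_j)\le 2$, and combined with $x_iv, vu, uy_j\in E_H$ this closes a cycle through $uv$ of length at most~$6$ (length~$5$ if the distance is realized by an edge, length~$6$ via a midpoint) — here one must be careful that the cycle genuinely contains the edge $uv$ and is not degenerate, which is where the assumption "$uv$ on no cycle of length $\le 6$" is used in its sharpest form.

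For conditions (c)--(f) I would handle the "outside" neighbours $w\in N_G(u)\setminus(X\cup Y\cup\{v\})$. Such a $w$ is adjacent to $u$ in $G$ but, by choice of $X,Y$ as the $H$-neighbourhoods, not adjacent to $u$ in $H$ (else it would be in $Y$) — so $\dist_H(u,w)=2$, witnessed by a common $H$-neighbour $x$ of $u$ and $w$; since $x\in N_H(u)$ and $x\ne v$ (as $w\notin N_G(v)\cup\{v\}$ would need checking, or rather $x=v$ would put $w$ at distance $2$ from $v$ too), we get $x\in X$, giving (e). For (c), if additionally $wy_j\in E_G$ for some $y_j\in Y$, then $\dist_H(w,y_j)\le 2$; together with the path $w\,x\,u\,v$ (wait — $x\in X=N_H(v)$, so actually $w\,x\,v\,u\,y_j$-type path) one again builds a cycle of length at most~$6$ through $uv$. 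Conditions (d) and (f) are the mirror images with the roles of $u,v$ and $X,Y$ swapped.

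I expect the main obstacle to be the careful bookkeeping in conditions (b), (c) and (d): one must ensure that the cycle produced from a "long" $G$-edge really has length at most~$6$, really passes through the edge $uv$, and is a genuine (not self-intersecting or length-$\le 2$ degenerate) cycle — several subcases arise depending on whether various $H$-distances are $1$ or $2$ and whether the intermediate vertices coincide. A delicate point is that when two such short paths share an endpoint or an internal vertex, the "cycle" may collapse; one resolves this by noting that the collapse typically yields an even shorter cycle through $uv$, or reveals that $w$ should have been in $X\cup Y$ after all. Handling these coincidences cleanly, rather than by brute enumeration, is the crux; everything else is a direct distance-to-cycle translation.
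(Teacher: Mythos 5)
First, a point of comparison: this paper does not prove Lemma~\ref{lem:edge-one2} at all; it is imported from~\cite{GKPS16}, so there is no in-paper proof to measure you against. Your strategy is nonetheless the natural one and surely the intended one: prove the contrapositive by showing that a non-pendant edge $uv$ of $H$ that lies on no cycle of length at most~$6$ is recognizable in $G=H^2$, taking the witness partition to be the two $H$-neighbourhoods of the endpoints, and converting every violation of conditions a)--f) into a cycle of length at most~$6$ through $uv$. All the individual verifications you sketch do go through in this way, including the identity $N_G(u)\cap N_G(v)=(N_H(u)\cup N_H(v))\setminus\{u,v\}$ and the disjointness of the two sides (a common $H$-neighbour gives a triangle through $uv$).

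The one concrete error is the orientation of the partition. Conditions c) and e) tie $X$ to $u$: outside $G$-neighbours of $u$ must attach to $X$ and avoid $Y$. An outside $G$-neighbour $w$ of $u$ has $\dist_H(u,w)=2$, and its middle vertex must lie in $N_H(u)\setminus\{v\}$ (it cannot be $v$, else $w\in N_G(u)\cap N_G(v)$). Hence the correct witness is $X=N_H(u)\setminus\{v\}$ and $Y=N_H(v)\setminus\{u\}$ --- the opposite of what you declare at the outset. With your declared labels, the vertex $x$ you produce when checking e) lies in your $Y$, so c) and e) as stated would actually be violated (take $H$ to be a long path through $u$ and $v$: the second neighbour of $u$ along the path is $G$-adjacent to the $H$-neighbour of $u$, not of $v$). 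You silently switch to the correct orientation when verifying e), and the mid-sentence confusion in your treatment of c) stems from this swap; after relabelling, that argument reads cleanly as the cycle $u\,x\,w\,(t)\,y_j\,v\,u$ of length $5$ or $6$. Two smaller remarks: non-emptiness of both sides needs no roundabout argument, since a non-pendant edge by definition has both endpoints of degree at least~$2$ in $H$; and in b) the cycles have length $4$ or $5$ (not $5$ or $6$), which is immaterial. The degenerate coincidences you flag as the crux (a middle vertex equal to $u$, $v$, or another middle vertex) always collapse to a triangle or a $4$-cycle through $uv$, so the case analysis is finite and benign, as you anticipated.
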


We are now ready to prove the aforementioned structural result. Its proof relies on Lemma~\ref{lem:edge-one2}.

\begin{lemma}\label{lem:size-bound}
Let $G$ be a connected graph with $\Delta(G)\leq 6$ that has no recognizable edges. If $G$ has a square root,
then $G$ has at most $103$ vertices. 
\end{lemma}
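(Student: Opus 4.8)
The plan is to bound the number of vertices of $G$ by analyzing a BFS structure of a square root $H$, exactly in the spirit of Lemma~\ref{lem:deg-5}, but now exploiting Lemma~\ref{lem:edge-one2} to control the \emph{depth} (diameter) of $H$ rather than only the width of a path decomposition. Concretely, let $H$ be a square root of $G$; since $G$ is connected, $H$ is connected. First I would fix a vertex $u$ and run BFS in $H$ from $u$, obtaining levels $L_0,\dots,L_s$. Because $\Delta(G)\le 6$ and for every vertex $x$ of $H$ all vertices within distance $2$ in $H$ are pairwise adjacent-to-$x$ in $G$, we get $d_H(x)\le 6$ and, more importantly, each vertex has at most $6$ neighbours in $G$, which caps how fast the levels can grow and how many vertices can sit close together. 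I would first dispose of the trivial cases where $H$ has very few vertices, and then assume $H$ has a non-pendant edge far from $u$ so that Lemma~\ref{lem:edge-one2} applies.

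The key step is to prove that $s$, the number of BFS levels, is bounded by an absolute constant. Suppose $s$ were large. Then there is a shortest path $P=u_0u_1\cdots u_s$ in $H$ from $u=u_0$ to a vertex $u_s$ in the last level. Pick an interior edge $u_tu_{t+1}$ of this path with $t$ and $s-t$ both large (say both at least $5$); this edge is non-pendant in $H$, so by Lemma~\ref{lem:edge-one2} it lies on a cycle $C$ of length at most $6$ in $H$. But a cycle of length $\le 6$ through $u_tu_{t+1}$ would provide a second $u_t$--$u_{t+1}$ path of length at most $5$, hence an alternative route that, combined with $P$, contradicts $P$ being a shortest path from $u$ of length $s$ — or more carefully, it forces the endpoints of $C$ on $P$ to be within distance $5$ along $P$, which is fine, so instead one argues that every \emph{edge} of $P$ sufficiently far from both ends lies on a short cycle, and a short cycle attached to a geodesic forces two nearby levels to coincide in a way that is impossible once the geodesic is long. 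The cleanest route: if $e=u_tu_{t+1}$ lies on a cycle of length $\le 6$, then $u_t$ and $u_{t+1}$ are joined in $H-e$ by a path of length $\le 5$, so $\dist_{H-e}(u,u_s)\le \dist_H(u,u_t)+5+\dist_H(u_{t+1},u_s)= (s-1)+5$, which is consistent; the real contradiction is obtained by choosing $e$ incident to $u_s$'s side — i.e. show instead that \emph{every} non-pendant edge being on a length-$\le 6$ cycle forces $\diam(H)$, and hence $s$, to be $O(1)$, because a long isometric path cannot have all its interior edges covered by cycles of bounded length without creating a vertex of $G$-degree exceeding $6$. I expect this to be the main obstacle, and the right way to push it through is a local counting argument around a would-be long geodesic in $H$, showing that the short cycles through consecutive interior edges must overlap and hence pile up neighbours in $G$ at some vertex, contradicting $\Delta(G)\le 6$.

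Once $s\le c_1$ for an explicit constant $c_1$, I would bound $|L_i|$ for each $i$. As in Claim~A of Lemma~\ref{lem:deg-5}, the degree constraint $\Delta(G)\le 6$ limits how many children a vertex of the BFS tree can have and how many descendants it can have two levels down: if $x\in L_i$ had too many children, or a child with too many children, then $x$ or a common ancestor would have more than $6$ neighbours in $G$. This yields a bound of the form $|L_{i+1}|\le \alpha\,|L_i|$ for $i\ge 2$ together with small base cases $|L_0|=1$, $|L_1|\le 6$, $|L_2|\le 6\cdot\text{(something)}$, and hence $|V_G|=\sum_{i=0}^{s}|L_i|\le c_2$. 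The numerics must be tuned so that the final sum is at most $103$; this is the routine-calculation part and I would simply carry out the recursion with the sharp per-level constants (using, e.g., that a vertex of $L_i$ with $i\ge 2$ has at most $5$ neighbours in $L_i\cup L_{i+1}$ other than its parent's subtree, etc., refining the crude bound $6$ wherever the BFS tree structure and the absence of recognizable edges allow).

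In summary: (1) take a square root $H$, BFS from a vertex $u$; (2) use $\Delta(G)\le 6$ to get local growth bounds on BFS levels (mimicking Claim~A); (3) use Lemma~\ref{lem:edge-one2} — every non-pendant edge of $H$ lies on a cycle of length $\le 6$ — to bound the number of levels $s$ by an absolute constant, which is the crux and requires a careful argument that a long isometric path in $H$ cannot have all interior edges on short cycles without violating the degree bound in $G$; (4) multiply out the level bounds to conclude $|V_G|\le 103$. The delicate points are getting the depth bound from Lemma~\ref{lem:edge-one2} right, and then choosing the level-growth constants tightly enough that the product lands at $103$ rather than something larger.
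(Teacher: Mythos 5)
Your plan follows the same overall strategy as the paper (BFS in a square root $H$ from an endpoint of a diametral pair, bound the number of levels using Lemma~\ref{lem:edge-one2}, bound the size of each level using $\Delta(G)\le 6$), but both load-bearing steps are left as gaps. The crux is the depth bound, and you correctly observe that the naive argument (``a short cycle through an interior edge contradicts $P$ being a geodesic'') yields no contradiction; at that point you defer to an unspecified ``local counting argument around a would-be long geodesic''. This is exactly where the paper spends almost all of its effort: it first proves a structural claim (Claim~A: if a vertex in $L_i$, $i\ge 2$, has at least two grandchildren, then all of them are children of a single child), then, via a lengthy case analysis on the short cycles guaranteed by Lemma~\ref{lem:edge-one2}, shows that along a shortest $(u,v)$-path $x_0\cdots x_s$ each $x_i$ with $3\le i\le s-4$ has $x_{i+1}$ as its \emph{unique} child (Claim~C) and, symmetrically, $x_{i-1}$ as its unique parent (Claim~D); only then does a further case analysis on a short cycle through $x_5x_6$ yield $s\le 8$. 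Nothing in your sketch indicates how the ``piling up of neighbours'' is actually forced, and it is not a routine step, so the depth bound remains unproven in your proposal.

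The second gap is the level-size bound. You propose a multiplicative recursion $|L_{i+1}|\le\alpha\,|L_i|$ derived from $\Delta(G)\le 6$ alone, to be ``tuned'' so the total is $103$. The degree bound by itself only gives that each vertex of $L_i$ ($i\ge 2$) has at most four children and grandchildren in total, which still allows the levels to grow geometrically (roughly a factor of $4$ every two levels); summed over up to nine levels this lands far above $103$, and no tuning of such a recursion recovers the stated constant. What the paper actually proves (Claim~B) is a \emph{uniform, non-growing} bound: every vertex of $L_2$ has at most four descendants in \emph{each} level $L_j$, $j\ge 3$, so $|L_j|\le 4|L_2|\le 16$; the induction behind this needs Claim~A (all grandchildren funnel through one child), which in turn relies on Lemma~\ref{lem:edge-one2} and not merely on the degree bound --- in contrast with the degree-$5$ case you are mimicking, where two children each having a child already forces degree at least $6$ in $G$. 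The constant $103$ is exactly $1+6+6\cdot 16$, and it is unreachable without this per-level descendant bound. So the proposal identifies the right ingredients but does not supply the arguments that make either the depth bound or the width bound work.
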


\begin{proof}
Let $G$ be a connected graph with $\Delta(G)\leq 6$ that has no recognizable edges.
Assume that $G$ has a square root~$H$.
We select a vertex $u$ for which there exists a vertex $v$, such that $\dist_H(u,v)=\diam(H)=s$. We apply a breadth-first search  on  
$H$ starting at $u$ to obtain levels $L_0,\ldots,L_s$, where $L_i=\{w\in V_G\; |\; \dist_G(u,w)=i\}$ for $i=0,\ldots,s$.
Note that $L_0=\{u\}$ and that $L_0\cup \cdots \cup L_s$ is a partition of $V_H=V_G$.
We say that a vertex $y$ is a \emph{child} of a vertex $x$, and that $x$ is a \emph{parent} of $y$
if $xy\in E_G$, $x\in L_i$ and $y\in L_{i+1}$ for some $i\in\{0,\ldots,s-1\}$. It is worth mentioning  
that this parent-child relation differs from the relation defined by the corresponding BFS-tree. 
In particular, a vertex may have several parents. We also say that a vertex~$z$ is a \emph{grandchild} of a vertex~$x$ and that $x$ is a \emph{grandparent} of $z$ if there is 
a vertex~$y$ such that $x$ is a parent of $y$ and $y$ is a parent of $z$. A vertex $y$ is a \emph{descendant} of a vertex $x$ if $x\in L_i$, $y\in L_j$ for some $i$, $j$ with $i<j$ and there is an $(x,y)$-path of length $|j-i|$.

We now prove a sequence of claims.

\medskip
\noindent
{\bf Claim A. }{\it 
Let $i\ge 2$ and $x\in L_i$ such that $x$ has at least two grandchildren.
Then $x$ has a child that is the parent of all grandchildren of $x$, while no other child of $x$ is the parent of a grandchild of $x$.}

\begin{figure}[ht]
\centering\scalebox{0.95}{\input{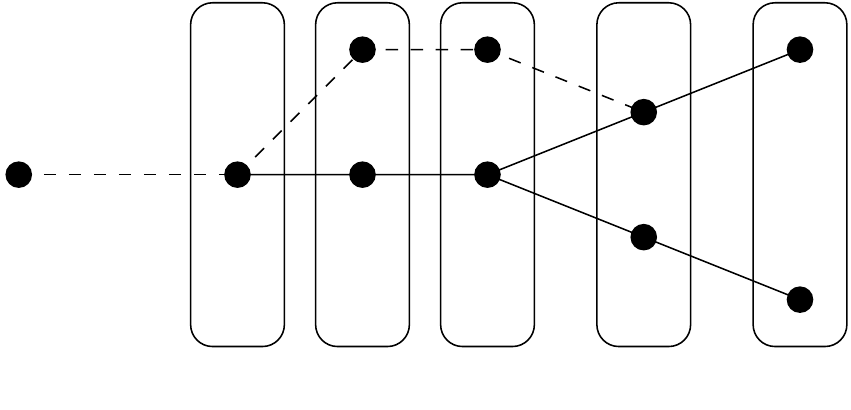_t}}
\caption{The configuration for $x$ and its relatives in the graph $H$ of the proof of Claim~$A$.
\label{fig:conf}}
\end{figure}

\medskip
\noindent
We prove Claim~A as follows.
For contradiction, assume there exists a vertex $x\in L_i$ for some $i\geq 2$  that has two distinct children $y_1,y_2,$ such that $y_1$ and $y_2$ have distinct children $z_1,z_2$ respectively, as shown in Fig.~\ref{fig:conf}. Notice that possibly $z_1$  ($z_2$ respectively) is a child of $y_2$ ($y_1$ respectively) as well. As $i\geq 2$, $x$ has a parent $v_2$, and $v_2$ has a parent $v_1$.
By Lemma~\ref{lem:edge-one2}, $xv_2$  is contained in a cycle~$C$ of $H$ of length 
at most~6.
If $v_2$ is adjacent to a vertex $z\notin \{ v_1,x \}$ in $H$, then $z\in L_{i-2}\cup L_{i-1}\cup L_{i}$ and hence $z\notin \{ y_1,y_2,z_1,z_2 \}$. Consequently $d_G(x)\geq 7$; a contradiction. Therefore, $d_H(v_2)=2$ and thus $v_1v_2\in E_C$.
If $x$ has a neighbour $r\notin \{ v_2,y_1,y_2\}$ in $H$, then 
$d_G(x)\geq 7$, because $r\in L_{i-1}\cup L_{i}\cup L_{i+1}$ and hence $r$ is distinct from $v_1,z_1,z_2$. This is again a contradiction. 
Therefore $C$ contains one of $xy_1$, $xy_2$, say $C$ contains $xy_1$. 
Hence $C=v_1v_2xy_1w_2w_1v_1$ for some $w_1\in L_{i-1}$ and $w_2\in L_i$; see also Fig.~\ref{fig:conf}.
However, then $x$ is adjacent to $v_1, v_2, w_2, y_1, y_2, z_1, z_2$ in $G$. Hence $d_G(x)\geq 7$, a contradiction.  
This completes the proof of Claim~A.

\medskip
\noindent
{\bf Claim B. }{\it 
Let $i\geq 2$ and $x\in L_i$. Then the number of descendants of $x$ in $L_j$ is at most four for every $j>i$. }

\medskip
\noindent
We prove Claim B as follows.
Note that for all $i\ge 2$, every vertex $x\in L_i$  
has at least 
two neighbours in $G$ that belong to 
$L_{i-1}\cup L_{i-2}$. Hence the fact that
$d_G(x)\leq 6$ implies that $x$ has at most four neighbours in $G$
belonging to $L_{i+1}\cup L_{i+2}$.
In other words, the total number of children and grandchildren of $x$ is at most four.

We use induction on $i$.  
Let $i=s-1$ or $i=s-2$, 
As the total number of children and grandchildren 
of $x$ is at most four, the claim holds.
Let $i<s-2$. Recall that $x$ has at most four children. Hence the claim holds for $j=i+1$. Let $j>i+1$. If $x$ has no grandchildren the claim holds.
Now suppose that $x$ has exactly one grandchild $z$. If $j=i+2$, then the number of descendants of $x$ in $L_j$ is one. If $j>i+2$, then by the induction hypothesis the number of descendants of $x$ in $L_j$ is at most four, since these vertices are descendants of $z$ as well.
Finally suppose that $x$ has at least two grandchildren. 
Then, by Claim A, there is a unique child $y$ of $x$ that is the parent of all grandchildren of $x$,
and no other child of $x$ has children.  By the induction hypothesis, the number of descendants of $y$ in $L_j$ is at most four. As all descendants of $x$ in $L_j$  are descendants of $y$, the claim holds.
This completes the proof of Claim~B.

\medskip
\noindent
Recall that $v$ is a vertex that is of distance~$s$ of $x$.

\medskip
\noindent
{\bf Claim C. }{\it 
Let $P=x_0\cdots x_s$ with $x_0=u$ and $x_s=v$ 
be a shortest $(u,v)$-path in  $H$. 
Then for every $i\in\{3,\ldots,s-4\}$,  $x_{i+1}$ is the unique child of $x_i$.}

\medskip
\noindent
We prove Claim~C as follows.
To obtain a contradiction, we assume that $x_i$ has another child $y\neq x_{i+1}$ for some $i\in \{3\ldots s-3\}$, so $y\in L_{i+1}$.
Since $G$ has no recognizable edge, Lemma~\ref{lem:edge-one2} tells us that in $H$  every non-pendant edge is contained in a cycle of length at most $6$. 

Let us first assume that there is no cycle of length at most~6 in $H$ that contains both $x_{i-2}x_{i-1}$ and $x_{i-1}x_i$; this case will be studied later. 
Let $C_1$ and $C_2$ be two cycles in $H$ of length at most~ 6 that contain  $x_{i-2}x_{i-1}$ and $x_{i-1}x_i$, respectively.
As $x_{i-1}x_i\notin E_{C_1}$ and $x_{i-2}x_{i-1}\notin E_{C_2}$, $C_1$ has an edge $x_{i-1}w$ and  $C_2$ has an edge $x_{i-1}w'$ for some $w,w'\notin \{x_{i-2},x_i\}$. Note that $w$ and $w'$ both belong to $L_{i-2}\cup L_{i-1}\cup L_i$.
 If $w\neq w'$, then in $G$ we see that $x_{i-1}$ is adjacent to $x_{i-3}$, $x_{i-2}$, $x_i$, $x_{i+1}$, $w$, $w'$ and $y$. Hence
$d_G(x_{i-1})\geq 7$ 
contradicting $\Delta(G)\leq 6$. It follows that $w=w'$.
 
Let $z_1w$ be an edge of $C_1$ and let $z_2w$ be an edge of $C_2$, such that
$x_{i-1}\notin \{z_1,z_2\}$.
If $z_1$ or $z_2$ does not belong to $\{x_{i-3},x_{i-2},x_i,x_{i+1},y\}$, then $d_{G}(x_{i-1})\geq 7$, a contradiction. Hence, $\{z_1,z_2\}\subset 
\{x_{i-3},x_{i-2},x_i,x_{i+1},y\}$ 
Recall that there is no cycle of length at most~6 that contains both $x_{i-2}x_{i-1}$ and
$x_{i-1}x_i$. This implies that $z_1\notin \{x_i,x_{i+1},y\}$ (in $C_1$ remove $w$ in the first case and replace $w$ by $x_i$ in the latter two cases) and 
$z_2\notin \{x_{i-3},x_{i-2}\}$ (in $C_2$ replace $w$ by $x_{i-2}$ in the first case and remove $w$ in the second case). 
As both $z_1$ and $z_2$ are adjacent to $v$, this implies that $w\in L_{i-1}$, $z_1=x_{i-2}$ and $z_2=x_i$.
However, this means that $x_{i-2}x_{i-1}x_iwx_{i-2}$ is a cycle of length~$4$ in $H$ that contains both $x_{i-2}x_{i-1}$ and $x_{i-1}x_i$, a contradiction.  
Consequently, we may assume that there is a cycle $C$ in $H$ of length at most $6$ that contains both $x_{i-2}x_{i-1}$ and $x_{i-1}x_i$. 
Since $x_{i-2}\in L_{i-2}$, we need to distinguish three cases. 

\medskip
\noindent
{\bf Case 1}. $x_ix_{i+1}\in E_C$.\\ Then  $C$ has an edge $x_{i+1}w$ for some $w\in L_i$ such that $w\neq x_{i}$ (see Fig.~\ref{fig:case-1}).
Then, $C$ has an edge $wz$ for some $z\in L_{i-1}$. However, then we obtain $d_G(x_{i+1})\geq 7$, a contradiction.

\begin{figure}[ht]
\centering\scalebox{0.95}{\input{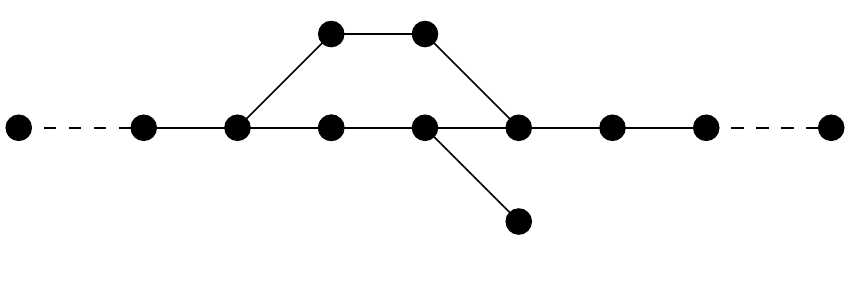_t}}
\caption{Case 1.
\label{fig:case-1}}
\end{figure}

\medskip
\noindent
{\bf Case 2.} $x_iy\in E_C$.\\ Then  $C$ has an edge $yw$ for some $w\in L_i$ such that $w\neq x_{i}$ (see Fig.~\ref{fig:case-2}).
Similarly to Case~1, $C$ has an edge $wz$ for some $z\in L_{i-1}$. 
Note that $zx_{i-2}\in E_C$ as $C$ has length at most~6. Hence, $C=x_{i-2}x_{i-1}x_iywzx_{i-2}$
(note that this is not in contradiction with Claim~A as $i-2=1$ may hold).

If $x_iw\in E_H$, then $d_G(x_{i-1})\geq 7$, a contradiction. 
If $x_{i+1}w\in E_H$, then $d_G(x_{i+1})\geq 7$. If $x_{i+2}y\in E_H$, then $d_G(y)\geq 7$. If $x_{i-1}z\in E_H$ or $x_iz\in E_H$, then $d_G(x_i)\geq 7$.
Hence, $x_iw,x_{i+1}w,x_{i+2}y,x_{i-1}z,x_iz\notin E_H$.  

\begin{figure}[ht]
\centering\scalebox{0.95}{\input{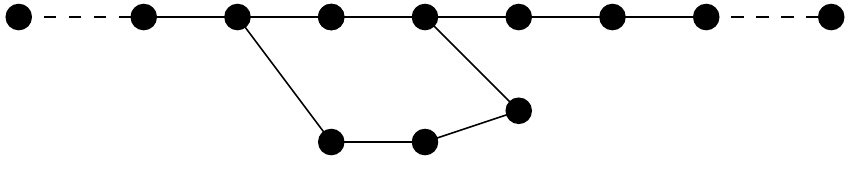_t}}
\caption{Case 2.
\label{fig:case-2}}
\end{figure}

The edge $x_{i+1}x_{i+2}$ is non-pendant and thus included in a cycle $C'$ of length at most~6 by Lemma~\ref{lem:edge-one2}. Assume that $C'$ is a shortest cycle of this type. Let $x_{i+1}h$ be the other edge in $C'$ incident with $x_{i+1}$.
We distinguish three subcases.

\medskip
\noindent
{\bf Case 2a.} $h\neq x_i$.\\
If $h\neq y$, then $d_G(x_i)\geq 7$, a contradiction. Hence, $h=y$. Then, $C'$ has an edge $yg$ for some $g\neq x_{i+1}$. 
As $C'$ is a shortest cycle with $x_{i+1}x_{i+2}$, we find that $g\neq x_i$ (as otherwise the cycle obtained from $C'$ after removing $y$ would be shorter) . Recall that $x_{i+2}y\notin E_H$. Hence $g\neq x_{i+2}$.
If $g\neq w$, this means that $d_G(x_i)\geq 7$, a contradiction. Therefore, $g=w$. Let $f$ be the next vertex of $C'$, so $wf\in E_{C'}$. As 
$C'$ has length at most~6, $f\notin L_{i-1}$, so $f\notin \{x_{i-1},z\}$.
 Recall that $x_iw$ and $x_{i+1}w$ are not in $E_{H}$, thus $f\notin \{x_i,x_{i+1}\}$ either. As $w\in L_{i}$, $z\neq x_{i+2}$. 
 Hence, $d_G(y)\geq 7$, a contradiction.

\medskip
\noindent
{\bf Case 2b.} $h=x_i$ and $x_ig\in E_{C'}$ for  some $g\notin \{x_{i-1},x_{i+1}\}$.\\
Recall that $x_iw$ and $x_iz$ are not in $E_H$. Hence $g\notin \{w,z\}$ either. 
If $g\neq y$, this means that $d_G(x_i)\geq 7$. Hence, $g=y$, that is, $x_iy\in E_{C'}$. Let $f$ be the next vertex of $C'$, so $yf\in E_{C'}$. 
Recall that $x_{i+2}y\notin E_H$. If $f\neq w$, this means that $d_G(x_i)\geq 7$. It follows that $f=w$, that is, $yw\in E_{C'}$. Let $f'$ be the next vertex of $C'$, so $wf'\in E_{C'}$. As $C'$ has length at most~6, we find that $f'\in L_{i+1}$.
As $C'$ has length at most 6, we find that $f'x_{i+2}\in E_H$ and thus $C'=x_{i+2}x_{i+1}x_iywf'x_{i+2}$ (see  Fig.~\ref{fig:case-2b}). 

If $x_{i+2}$ has a neighbour in $H$ distinct from $x_{i+1},x_{i+3},f'$, then $d_G(x_{i+2})\geq 7$, a contradiction. Hence, $N_H(x_{i+2})=\{x_{i+1},x_{i+3},f'\}$. If $x_{i+1}$ has a neighbour in $H$ distinct from $x_i$ and $x_{i+2}$, then that neighbour cannot be in $\{y,w,f'\}$, as $C'$ is a shortest cycle containing $x_{i+1}x_{i+2}$.
Consequently, we find that $d_G(x_{i+1})\geq 7$, a contradiction. Hence, $N_H(x_{i+1})=\{x_i,x_{i+2}\}$.
Recall that $x_iz\notin E_H$. Moreover, as $C'$ is a shortest cycle containing $x_{i+1}x_{i+2}$, we find that $x_iw, x_if'\notin E_H$.
Consequently, if $x_i$ has a neighbour in $H$ distinct from $x_{i-1},x_{i+1},y$, then $d_G(y)\geq 7$, a contradiction. Hence, $N_H(x_i)=\{x_{i-1},x_{i+1},y\}$. 
As $C'$  is a shortest cycle containing $x_{i+1}x_{i+2}$, we find that $yf' \notin E_H$.
Consequently, if $y$ has a neighbour in $H$ distinct from $x_i,w$, then $d_G(y)\geq 7$, a contradiction. Hence, $N_H(y)=\{x_i,w\}$. 
If $w$ has a neighbour distinct from $y,z,f'$, then $d_G(y)\geq 7$, a contradiction. Hence, $N_H(w)=\{y,z,f'\}$.
If $f'$ has a neighbour distinct from $w,x_{i+2}$, then $d_G(w)\geq 7$, a contradiction. Hence, $N_H(f')=\{x_{i+2},w\}$.
  
Now consider the (non-pendant) edge $x_{i+2}x_{i+3}$, which must be in a cycle~$C''$ of length at most~6 due to Lemma~\ref{lem:edge-one2}. 
As $x_{i+3}\in L_{i+3}$ and $|E_{C''}|\leq 6$, we find that $C''$ contains no vertex of $L_{i-1}$. Now, by traversing $C''$ starting at $x_{i+2}$ in opposite direction from $x_{i+3}$, we find that $C''$ contains the cycle $x_{i+2}x_{i+1}x_iywf'x_{i+2}$. Hence $d_{C''}(x_{i+2})\geq 3$. which means that $C''$ is not a cycle, a contradiction.

\begin{figure}[ht]
\centering\scalebox{0.95}{\input{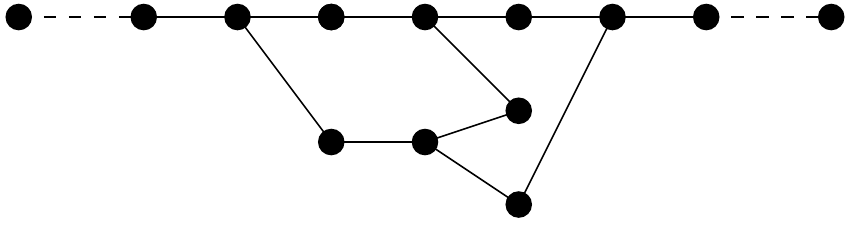_t}}
\caption{Case 2b.
\label{fig:case-2b}}
\end{figure}

\medskip
\noindent
{\bf Case 2c.} $h=x_i$ and $x_ix_{i-1}\in E_{C'}$\\ 
Let $g$ be the next vertex of $C'$, so $x_{i-1}g\in E_{C'}$.
As $C'$ has length at most~6, we find that $g\in L_i$. Recall that $x_{i-1}w\notin E_H$. Then we find that $d_G(x_i)\geq 7$, a contradiction.

\medskip
\noindent
{\bf Case 3.}  $x_ix_{i+1},x_iy\notin E_C$.\\
Then $C$ has an edge $x_iw$ for some $w\notin \{x_{i-1},x_{i+1},y\}$ (see Fig.~\ref{fig:case-3}). 
Let $z$ be the next vertex of $C'$, so $wz\in E_C$. 
As $C$ has length at most~6, we find that $z\notin L_{i+1}\cup L_{i+2}$.
Hence, if $z\neq x_{i-2}$, then $d_G(x_i)\geq 7$, a contradiction. This means that $z=x_{i-2}$, and consequently $C=x_{i-2}x_{i-1}x_iw,x_{i-2}$ and $w\in L_{i-1}$. 

Again let $C'$ be a shortest cycle amongst all cycles that contains $x_{i+1}x_{i+2}$. Recall that the length of $C'$ is at most~6 by Lemma~\ref{lem:edge-one2}. Let $x_{i+1}h$ be the other edge in $C'$ incident with $x_{i+1}$. As in Case 2, we distinguish three subcases.

\begin{figure}[ht]
\centering\scalebox{0.95}{\input{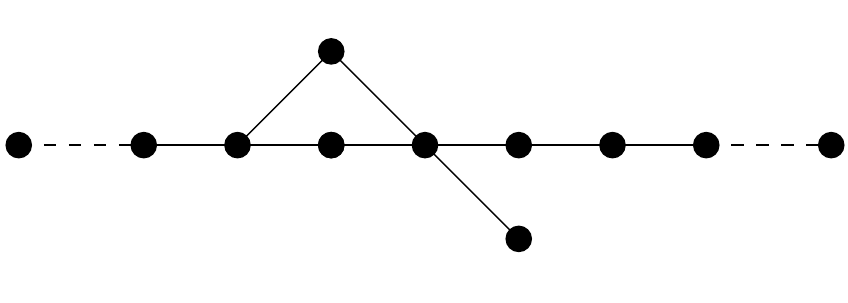_t}}
\caption{Case 3.
\label{fig:case-3}}
\end{figure}

\medskip
\noindent
{\bf Case 3a.} $h\neq x_i$.\\
If $h\neq y$, then $d_G(x_i)\geq 7$. Hence, $h=y$, that is, $x_{i+1}y\in E_{C'}$. 
Let $g$ be the next vertex of $C'$, so $yg\in E_{C'}$. 
As $C'$ is a shortest cycle containing $x_{i+1}x_{i+2}$, we find that $g\neq x_i$ (otherwise remove $y$ from $C'$ to obtain a shorter cycle).
As $yx_{i+2}\notin E_H$ due to Claim~A, we find that $g\neq x_{i+2}$. 
Since $y\in L_{i+1}$ and $w\in L_{i-1}$, we find that $g\neq w$ either. 
Then $d_G(x_i)\geq 7$ holds, a contradiction.

\medskip
\noindent
{\bf Case 3b.} $h=x_i$ and $x_ig\in E_{C'}$ for  some $g\notin \{x_{i-1},x_{i+1}\}$.\\
If $g\notin \{w,y\}$, then $d_G(x_i)\geq 7$. 
First suppose $g=w$, so
$x_iw\in E_{C'}$. Let $f$ be the next vertex of $C'$, so $wf\in E_{C'}$.
As $w\in L_{i-1}$ and $C'$ has length at most~6, we find that $f\in L_i$.
However, then $d_G(x_i)\geq 7$, a contradiction.
Now suppose, $g=y$, so $x_iy\in E_{C'}$. Let $f'$ be the next vertex of $C'$, so $yf'\in E_{C'}$.  
Recall that $yx_{i+2}\notin E_H$ due to Claim~A.
Then $d_G(x_i)\geq 7$ holds, a contradiction.

\medskip
\noindent
{\bf Case 3c.} $h=x_i$ and $x_ix_{i-1}\in E_{C'}$\\
As $C'$ has length at most~6, we find that $h\in L_i$. Then $d_G(x)\geq 7$, a contradiction.

\medskip
\noindent
We considered all possibilities, and in each case we obtained a contradiction. Hence we have proven Claim~C.

\medskip
\noindent
{\bf Claim D. }{\it 
Let $P=x_0\cdots x_s$ with $x_0=u$ and $x_s=v$ 
be a shortest $(u,v)$-path in $H$.
Then for every $i\in\{4,\ldots,s-3\}$, $x_{i-1}$ is the unique parent of $x_i$.}

\medskip
\noindent
To prove Claim~D, it suffices to run a breadth-first search in $H$ from $v$, to consider the resulting levels $L_0',\ldots,L_{s}'$,
where where $L_i'=\{w\in V_G\; |\; \dist_G(v,w)=i\}$ for $i=0,\ldots,s$, and to apply Claim~C.

\medskip
\noindent
We are now ready to complete the proof. We do this by first showing  that $s\leq 8$. To obtain a contradiction, assume that $s\geq 9$. 
By Lemma~\ref{lem:edge-one2}, $H$ has a cycle $C$ of length at most 6 that contains $x_5x_6$. As $x_6$ is the unique child of $x_5$ by Claim~C and $x_5$ is the unique parent of $x_6$ by Claim~D, we find that $C$ has an edge $yz$, where $y\in L_5$, $z\in L_6$, such that $C$ contains an $(x_5,y)$ path $Q$ of length at most 3 with $V_Q\subseteq L_4\cup L_5$.  

Suppose that $x_5y\in E_Q$. Then $y$ has a parent $h\in L_4$. 
By Claim~C, $h\neq x_4$. Therefore, $d_G(x_5)\geq 7$, a contradiction. Hence $x_5y\notin E_Q$. 
Suppose that $V_Q\subseteq L_5$. Then $Q$ has edges $x_5w,wh$ such that 
$w,h\in L_5$. Again, $w$ has a parent $g\in L_4$ and $g\neq x_4$ due to Claim~C. It follows that $d_G(x_5)\geq 7$; a contradiction. Hence, $Q$ has a vertex of $L_4$.

As $Q$ contains a vertex of $L_4$, $Q$ has length at least~2. If $Q$ has length~2, then $Q=x_5x_4y$. However, this is a contradiction, as
$x_5$ is the unique child of $x_4$ by Claim~C. Hence, $Q$ has length~3, which implies that $Q=x_5x_4zy$ for some $z\in L_4$ or
$Q=x_5ww'y$ for some $w\in L_5$ and $w'\in L_4$.   

First suppose that $Q=x_5x_4zy$ for some $h\in L_4$. Let $g\in L_3$ be a parent of $h$. As $x_4$ is the unique child of $x_3$ by Claim~C,
we find that $g\neq x_4$. Then $d_G(x_4)\geq 7$, a contradiction.
Now suppose that $Q=x_5ww'y$ for some $w\in L_5$ and $w'\in L_4$. Note that $w'\neq x_4$, as $x_5$ is the unique child of $x_4$ by Claim $C$. 
As $C$ has length at most~6, we find that $C=x_6x_5ww'yzx_6$. This means that in $G$, $x_5$ is adjacent to $x_3,x_4,x_6,x_7,w,w',z$, so
$d_G(x_5)\geq 7$, a contradiction. We conclude that $s\leq 8$.

As $L_0=\{u\}$, we find that $|L_0|=1$. 
Since $d_G(u)\leq 6$, $|L_1\cup L_2|\leq 6$. As $|L_1|\geq 1$, this means that $|L_2|\leq 5$. 
If $|L_2|=5$, then $L_1=\{v\}$ for some $v\in V_H$ and each vertex of $L_2$ is a child of $v$. As $d_G(v)\leq 5$, 
this means that $V_G=V_H=L_0\cup L_1\cup L_2$, so $|V_H|=1+1+5=7$. Suppose $|L_2|\leq 4$. By Claim B, $|L_i|\leq 4|L_2|\leq 16$ for $i\geq 3$. 
Because $s\leq 8$, $|V_G|=|V_H|=|L_0|+|L_1\cup L_2|+ |L_3| + \cdots +|L_8|\leq 1 + 6 + 6\cdot 16 = 103$.\qed
\end{proof}

We are now ready to prove our main result. Its proof uses Lemmas~\ref{lem:preproc} and~\ref{lem:size-bound}.

\begin{theorem}\label{thm:root}
 {\sc Square Root} can be solved in time $O(n^4)$ for  $n$-vertex graphs of maximum degree at most~$6$.  
\end{theorem}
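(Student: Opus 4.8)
\noindent{\bf Proof sketch.} The plan is to combine the polynomial-time preprocessing of Lemma~\ref{lem:preproc} with the constant vertex bound of Lemma~\ref{lem:size-bound}, and then to finish by brute force. First I would record that the problem decomposes over connected components: if $H^2=G$ then every connected component of $G$ has a square root (the one induced by $H$), and conversely the disjoint union of square roots of the components of $G$ is a square root of $G$; the same holds for {\sc Square Root with Labels}, since the edge labels split over the components. So it suffices to solve each component, and throughout I may treat the instances as connected.

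Given $G$ with $\Delta(G)\le 6$, I would run Lemma~\ref{lem:preproc} on the instance $(G,\emptyset,\emptyset)$ of {\sc Square Root with Labels} (which is equivalent to {\sc Square Root} on $G$). Since $\Delta(G)\le 6$ gives $m\le 3n$, the running time $O(n^2m^2)$ becomes $O(n^4)$. Either Lemma~\ref{lem:preproc} decides the instance outright, in which case we are done, or it returns an equivalent instance $(G',R',B')$ with no recognizable edges; here I would use that this reduction only deletes vertices and edges, so that $\Delta(G')\le 6$ as well. Next I would look at the connected components of $G'$. Whether an edge $uv$ is recognizable depends only on $N(u)$, $N(v)$, and the adjacencies among $N(u)\cap N(v)\cup\{u,v\}$, all of which lie in the component of $G'$ containing $uv$; hence every component $C$ of $G'$ is a connected graph with $\Delta(C)\le 6$ and no recognizable edges. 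If some such $C$ has more than $103$ vertices, then by Lemma~\ref{lem:size-bound} $C$ has no square root, so the labelled instance on $C$ is a no-instance; since any solution to $(G',R',B')$ restricts to a solution on each component, $(G',R',B')$---and hence $(G,\emptyset,\emptyset)$---is a no-instance and I would reject. Otherwise every component of $G'$ has at most $103$ vertices, and for each one I would solve {\sc Square Root with Labels} by enumerating all at most $2^{\binom{103}{2}}$ graphs on its vertex set and testing whether the square of each equals that component and respects its labels; this is constant time per component, hence $O(n)$ over all components, and I would accept if and only if every component is a yes-instance. The total running time is dominated by the one call to Lemma~\ref{lem:preproc}, so it is $O(n^4)$.

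I do not expect a genuine obstacle, because the two substantive ingredients---the reduction to instances with no recognizable edges, and the $103$-vertex bound for such connected instances of maximum degree at most~$6$---are already established. What would need to be written out carefully, rather than being difficult, is the glue: that {\sc Square Root} and {\sc Square Root with Labels} are interchangeable here (the final brute-force step copes with arbitrary edge labels for free); that the preprocessing of Lemma~\ref{lem:preproc} keeps the maximum degree at most~$6$ and keeps, within each component, the absence of recognizable edges; and that the problem---with or without labels---decomposes over connected components.
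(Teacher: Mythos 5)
Your proposal is correct and follows essentially the same route as the paper: reduce to {\sc Square Root with Labels} with $R=B=\emptyset$, preprocess via Lemma~\ref{lem:preproc} in $O(n^2m^2)=O(n^4)$ time, apply Lemma~\ref{lem:size-bound} to each component of the reduced instance (rejecting if some component exceeds 103 vertices), and finish by constant-time brute force per component. You in fact spell out some glue the paper leaves implicit (that the reduction preserves maximum degree at most~6, that recognizability and the labelled brute force localize to components), which is consistent with what the paper's proof tacitly assumes.
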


\begin{proof}
Let $G$ be a graph of maximum degree at most 6. 
We construct an instance $(G,R,B)$ of  {\sc Square Root with Labels}  from 
$G$ by setting $R=B=\emptyset$. Then we preprocess $(G,R,B)$ using Lemma~\ref{lem:preproc}.
In this way we either solve the problem (and answer {\tt no}) or obtain an equivalent instance $(G',R',B')$ of  {\sc Square Root with Labels} that has no recognizable edges. 
In the latter case we know from Lemma~\ref{lem:size-bound} that if $G'$ has a square root, then each component of $G'$ 
has at most 103 vertices. Hence, if $G'$ has a component with at least 100 vertices, then we stop and return {\tt no}. Otherwise, we solve the problem for each component of $G'$ in constant time by brute force.
Applying Lemma~\ref{lem:preproc} takes time $O(n^2m^2)=O(n^4)$, as $\Delta(G)\leq 6$. 
The remainder of our algorithm takes constant time. Hence, its total running time is $O(n^4)$.\qed
\end{proof}

\medskip
\noindent
{\bf Remark~2.}
We cannot extend this approach for graphs of maximum degree at most~7. In particular, 
the following example shows that we cannot obtain an analog of Lemma~\ref{lem:size-bound} for graphs of maximum degree at most 7. Let $H$ be the graph obtained from two paths $u_1,\ldots,u_n$ and $v_1,\ldots,v_n$ by adding the edge~$u_iv_i$ for $i\in\{1,\ldots,n\}$. Notice that $G=H^2$ has no recognizable edges while
$\Delta(G)=7$. However, its size $|V_G|=|V_H|=2n$  can be arbitrarily large.

\section{Conclusion}\label{s-con}

We showed that {\sc Square Root} can be solved in polynomial time for graphs of maximum degree at most~6. We pose two open problems. First, can we solve {\sc Square Root} in polynomial time on graphs of maximum degree at most~7; this will require new techniques.
Second, does there exist an integer~$k$ such that {\sc Square Root} is \NP-complete for graphs of maximum degree at most~$k$?

\end{document}